\documentclass[a4paper,runningheads]{llncs}
\usepackage[T1]{fontenc}
\usepackage{graphicx}
\usepackage{amssymb}
\usepackage{algorithm}
\usepackage{algpseudocode}
\usepackage{amsmath}
\usepackage{float}
\usepackage{placeins}
\usepackage{url}
\usepackage{bbding}

\usepackage[hidelinks]{hyperref}
\usepackage{orcidlink}
\begin{document}
\sloppy
\raggedbottom
\makeatletter
\let\theorem\relax
\let\endtheorem\relax
\makeatother
\spnewtheorem{theorem}{Theorem}{\bfseries}{\itshape}
\title{Behavioral Privacy Leakage in Agentic Negotiation: Formalizing and Mitigating Inference Attacks via Randomized Policies}
\titlerunning{Behavioral Privacy Leakage in Agentic Negotiation}
%
\author{Barkha Rani\textsuperscript{\Envelope}\,\orcidlink{0009-0008-8222-5970}}

\authorrunning{B. Rani}

\institute{Apple Inc.\\
\email{barkha\_rani@apple.com}}

\maketitle              
\begin{abstract}
Autonomous negotiation agents are increasingly deployed in high-stakes settings such as insurance and procurement. While cryptographic techniques protect explicitly disclosed constraint values, they fail to address a subtler threat: behavioral privacy leakage, where an adversary infers private constraints from observable negotiation dynamics such as concession trajectories, timing, and convergence patterns.
This paper investigates behavioral differential privacy in multi-round negotiation protocols. We design an adaptive stochastic negotiation policy that jointly guarantees $(\varepsilon, \delta)$-differential privacy, almost-sure convergence of the offer sequence (reaching agreement when the counterparty's reservation value permits), and high negotiation utility. Evaluated on 3,000 synthetic bilateral negotiations, our mechanism reduces adversarial inference accuracy by 43--50\% while maintaining a negotiation success rate and utility above 90\%, demonstrating that strong privacy guarantees can be achieved without significant loss of performance.
\keywords{Differential Privacy  \and Autonomous Agents \and Negotiation \and Side-Channel Attacks \and Behavioral Privacy}
\end{abstract}
\section{Introduction} 

Large language models now power autonomous agents that operate in
high-stakes negotiation settings, including insurance pricing,
procurement contracts, and financial services. These systems act on
behalf of users whose private constraints such as maximum budgets
or reservation values must remain confidential throughout the
negotiation process. Existing defenses leverage cryptographic primitives — including computation sharing protocols, proof-of-knowledge schemes, and lattice-based encryption — to prevent direct exposure of constraint values to opposing parties.

However, cryptographic protection of explicit data does not address a
more subtle threat: the negotiation \emph{behavior} itself constitutes
a side channel. An agent's offer sequence, concession trajectory,
response timing, and convergence speed are all observable by the
counterparty, and together they form a rich behavioral trace from
which private constraints can be inferred even when the underlying
data is cryptographically protected.

Consider a concrete example. Alice employs an autonomous agent to
negotiate a health insurance premium, with a private budget of
\$3,000. The agent is configured with a zero-knowledge proof
preventing the insurer from learning her budget directly.
Nevertheless, the agent opens at \$2,600, advances to \$2,850 in the
second round, and reaches \$2,950 in the third. The accelerating
concession pattern - large early moves tapering toward a sharp
plateau - reveals to a sophisticated counterparty that Alice's true
budget is close to \$3,000. No cryptographic mechanism prevents this
inference: the information leaks through the structure of the
behavior, not through any disclosed value.

This class of vulnerability which we term \emph{behavioral privacy
leakage} has not been formally studied in sequential negotiation
systems. Prior work on privacy-preserving negotiation focuses
exclusively on protecting explicit constraint data, leaving
the behavioral side channel unaddressed. Meanwhile, the differential
privacy literature, which provides strong formal guarantees against
statistical inference, was developed for static database settings and
does not transfer directly to multi-round strategic interactions where
convergence and utility must also be preserved.

This paper closes that gap. We formalize behavioral differential
privacy for sequential negotiation agents, develop a mechanism that
provably satisfies $(\varepsilon, \delta)$-differential privacy over
observable negotiation traces, and prove that the mechanism converges
almost surely while preserving high negotiation utility.

\paragraph{Contributions.} This work makes four key contributions: 

\begin{enumerate}
    \item \textbf{Formalization.} We introduce the first rigorous 
formalization of behavioral differential privacy adapted to 
sequential negotiation: a constraint-space adjacency structure 
that captures budget proximity, paired with $(\varepsilon, \delta)$-DP 
guarantees over the distribution of observable offer sequences.

    \item \textbf{Mechanism.} An adaptive randomized negotiation policy with a safety critic that calibrates noise to the negotiation phase, preserves feasibility at each round, and satisfies differential privacy via the post-processing theorem. 

    \item \textbf{Theoretical guarantees.} Formal proofs that the mechanism achieves $(\varepsilon, \delta)$-DP under a public-proxy clipping assumption, that the offer sequence converges almost surely, and that the mechanism incurs at most $O(\sigma_{\max}^2)$ Nash surplus loss relative to the deterministic baseline, where $\sigma_{\max}$ governs the privacy-utility tradeoff.

    \item \textbf{Empirical validation.} Evaluation on 3,000 synthetic bilateral negotiations generated from the proposed model. The mechanism reduces adversarial inference accuracy by 43--50\% relative to a non-private baseline, while maintaining or slightly improving non-private utility and achieving a 90.4\% negotiation success rate.
\end{enumerate}    

\section{Related Work}

\subsection{Autonomous Negotiation Agents} 
Recent advances in foundation models have enabled increasingly capable autonomous agents~\cite{4,5}. Multi-agent frameworks such as MetaGPT~\cite{6} and CAMEL~\cite{7} demonstrate sophisticated collaborative reasoning, and LLM-based negotiators~\cite{40,41} exhibit strategic behavior in bilateral settings. However, none of these systems mitigate the privacy risks arising from observable
negotiation behavior. Our work is the first to address this gap. 

\subsection{Cryptographic Privacy in Negotiation} 
Cryptographic techniques --- secure multi-party computation (MPC)~\cite{8,9}, zero-knowledge proof systems~\cite{9,10,11}, and fully homomorphic encryption~\cite{12,13} --- are employed to prevent the leakage of the agents' actual negotiation constraints. In this way, an agent's private reservation value cannot be deduced from any of the values that are revealed during the negotiation. They do not, however, prevent inference from \emph{how} an agent behaves: concession trajectories, response timing, and
convergence patterns remain fully observable and carry substantial information about hidden constraints. Our approach is complementary --- it targets the behavioral side channel that persists even under full
cryptographic protection.
Concurrently, \cite{roy2026device} proposes a device-native 
negotiation architecture using zero-knowledge proofs, but 
identifies behavioral inference attacks as an open problem 
and defers randomized concession schedules to future work.

\subsection{Differential Privacy} 
Differential privacy (DP)~\cite{14,15,16} provides rigorous statistical guarantees via calibrated noise mechanisms, including the Laplace mechanism~\cite{14} and composition theorems~\cite{17,18}. DP has been successfully applied to machine learning via DP-SGD~\cite{19}, PATE~\cite{20}, and federated learning~\cite{21,22}. These formulations, however, are designed for static database or training settings. Applying DP to sequential strategic interactions introduces new challenges: the adjacency relation must be defined over constraint spaces rather than datasets,
and the mechanism must preserve both convergence and negotiation utility across multiple rounds. We address these challenges directly. 

\subsection{Side-Channel Attacks and Behavioral Leakage} 
Membership inference~\cite{23,24}, model inversion~\cite{25,26}, attribute inference~\cite{27}, and training data extraction~\cite{28,29} demonstrate that ML systems leak private information through observable outputs. Timing attacks~\cite{30,31} and power analysis~\cite{32} show that implementation behavior
constitutes an independent leakage channel. Recent work~\cite{33,34,35} demonstrates behavioral leakage in AI systems broadly, but does not formalize inference attacks over sequential negotiation traces or provide convergence-preserving defenses.
Related work evaluates contextual privacy leakage in 
collaborative LLM agents~\cite{juneja2025magpie}, but 
addresses voluntary over-disclosure between agents rather 
than adversarial inference of private constraints from 
offer trajectories.

\subsection{Game-Theoretic Foundations} 
Nash bargaining~\cite{36} and Rubinstein's alternating-offers model~\cite{37} establish the theoretical foundations for rational bilateral negotiation. Mechanism design~\cite{38,39} studies incentive-compatible protocol construction. Our work builds on these foundations but asks a distinct question: how should a rational agent \emph{randomize} its strategy to prevent private constraint inference, while preserving the game-theoretic properties that guarantee agreement? 

\subsection{The Gap This Work Fills} 
Table~\ref{tab:related} summarizes the landscape. No prior 
work simultaneously (i) formalizes a DP adjacency relation 
over negotiation constraint spaces, (ii) provides convergence 
guarantees under randomization, and (iii) validates against 
adversarial inference models on synthetic bilateral 
negotiation simulations. Concurrent work~\cite{roy2026device} 
identifies the behavioral side channel as an open problem but 
does not formalize or solve it. This paper addresses all three.

\begin{table}[t]
\caption{Comparison with related approaches.}
\label{tab:related}
\centering
\begin{tabular}{lcccc}
\hline
\textbf{Approach} & \textbf{Protects} & \textbf{Behavioral} & \textbf{Convergence} & \textbf{Empirical} \\
 & \textbf{explicit data} & \textbf{privacy} & \textbf{guarantee} & \textbf{validation} \\
\hline
Cryptographic~\cite{8,9,10} & \checkmark & \texttimes & N/A & \checkmark \\
Standard DP~\cite{14,15}    & \checkmark & Partial   & \texttimes & \checkmark \\
Behavioral ML~\cite{33,34}  & \texttimes & Partial   & \texttimes & \checkmark \\
\textbf{This work}          & \checkmark & \checkmark & \checkmark & \checkmark \\
\hline
\end{tabular}
\end{table} 

\section{Threat Model}
\label{sec:threat}
We consider a passive external adversary who observes the full sequence of messages exchanged during a negotiation but does not participate in or interfere with the negotiation process. 

\subsection{Adversary Capabilities} 
The adversary has access to the complete observable negotiation trace $\tau = (o_1, o_2, \ldots, o_T)$, where each $o_t$ denotes the offer made at round $t$. Specifically, the adversary observes: 

\begin{itemize}
    \item The sequence of offers: $(o_1, o_2, \ldots, o_T)$
    \item Inter-round timing intervals: $(\Delta t_1, \Delta t_2, \ldots, \Delta t_{T-1})$
    \item The final agreed value and outcome
    \item The total number of rounds to convergence
\end{itemize} 

The adversary does not observe the agent's private constraint $\theta$ (e.g., maximum budget) directly. Using a dataset of historical negotiations, the adversary trains a predictive model $\mathcal{A}: \tau \mapsto \hat{\theta}$ to infer $\theta$ from
observable traces. We evaluate against three adversary instantiations: gradient-boosted trees (XGBoost), random forests, and neural networks. 

\subsection{Privacy Goal} 
The core privacy objective is to prevent a counterparty from
reliably inferring the agent's private constraint $\theta$ by
observing its negotiation behavior. Formally, we require that
any two constraint values $\theta, \theta'$ satisfying the
adjacency condition $|\theta - \theta'| \leq \Delta$ produce
statistically indistinguishable offer sequences. Specifically,
the randomized mechanism $\mathcal{M}$ must satisfy
$(\varepsilon, \delta)$-differential privacy over observable
traces:
\begin{equation}
\Pr[\mathcal{M}(\theta) \in S] \leq e^{\varepsilon} \cdot
\Pr[\mathcal{M}(\theta') \in S] + \delta
\end{equation}
for all measurable output sets $S$. Behavioral DP is thus an application of standard $(\varepsilon, \delta)$-differential privacy with a novel adjacency relation defined over the constraint space $\Theta$ rather than over datasets, applied to the distribution of observable offer sequences rather than to a static query response.

\subsection{Utility Goals}  
Privacy protection must not come at the cost of negotiation effectiveness. We require: 

\begin{itemize}
    \item Negotiation success rate $\geq 90\%$
    \item Nash surplus (defined as $\text{NS} = o_T / \theta$, 
the ratio of the final agreed value to the agent's private 
constraint) preserved at $\geq 90\%$ of the deterministic baseline
    \item Convergence time $\leq 1.5\times$ the deterministic baseline
\end{itemize}

\subsection{Leakage Taxonomy}
We identify four categories of behavioral leakage in negotiation traces: 

\begin{enumerate}
    \item \textbf{Temporal leakage.} Response timing and round duration reveal urgency and constraint proximity. 
    \item \textbf{Trajectory leakage.} The shape of the concession path (linear, concave, convex) encodes the agent's distance from its reservation value.
    \item \textbf{Concession leakage.} The magnitude and rate of concessions directly signal the gap between current offer and private budget. 
    \item \textbf{Convergence leakage.} The number of rounds and the pattern of offer stabilization reveal the tightness of private constraints.
\end{enumerate}  

We note that this work focuses on a passive adversary that 
observes negotiation traces without influencing the interaction. 
In practice, stronger adversaries may actively adapt their 
negotiation strategy to probe private constraints or exploit 
repeated interactions across multiple negotiations. Extending 
behavioral differential privacy guarantees to such active or 
adaptive adversaries is an important direction for future work.

\section{Methodology}

\subsection{Problem Formulation}
Let $\theta \in \Theta$ denote the agent's private constraint (e.g., maximum budget), and let $\tau = (o_1, o_2, \ldots, o_T)$ denote the observable negotiation trace produced by policy $\pi_\theta$. We seek a randomized policy $\mathcal{M}$ such that: 

\begin{enumerate}
    \item $\mathcal{M}$ satisfies $(\varepsilon, \delta)$-differential
    privacy over observable traces
    \item $\mathcal{M}$'s offer sequence converges almost surely (reaching agreement when the counterparty's reservation value permits)
    \item $\mathcal{M}$ preserves high negotiation utility relative
    to the deterministic baseline
\end{enumerate}

\subsection{Deterministic Baseline Policy} 
The deterministic baseline policy $\pi^*$ follows a concession function of the form: 

\begin{equation}
o_t = o_1 + (\theta - o_1) \cdot \left(\frac{t}{T}\right)^\alpha
\end{equation}

where $o_1$ is the opening offer, $T$ is the maximum number of rounds, and $\alpha > 0$ controls concession speed. Here $t$ increases from $1$ to $T$, so the agent opens below $\theta$ and concedes monotonically toward it as $t \to T$; at $t = T$ the offer reaches $\theta$, representing full concession to the private constraint. This policy is optimal in expectation but fully reveals $\theta$ through its concession trajectory, motivating our privacy mechanism.

\subsection{Adaptive Noise Schedule}
The core of our mechanism is an adaptive noise schedule that calibrates the magnitude of randomization to the negotiation phase. The phase-adaptive noise parameter $\sigma_t$ is defined as:

\begin{equation}
\sigma_t = \sigma_{\max} \cdot \left(1 - \frac{t}{T}\right)^\beta
\end{equation}

where $\sigma_{\max}$ sets the peak randomization level and $\beta > 0$ controls the decay of noise with time. At round $t$, the randomized offer is then:

\begin{equation}
\tilde{o}_t = o_t + \eta_t, \quad \eta_t \sim \mathcal{N}(0, \sigma_t^2)
\end{equation}

Crucially, perturbation is concentrated in the opening rounds,
where offer trajectories are most diagnostic of private
constraints, and tapers off near convergence to protect
negotiation utility when agreement is imminent.

\subsection{Safety Critic}
The addition of noise may produce offers that violate feasibility constraints (e.g., offers exceeding the agent's true budget $\theta$). The safety critic applies a deterministic post-processing step to enforce feasibility at each round:
\begin{equation}
o_t^{\text{safe}} = \text{clip}(\tilde{o}_t,\ o_{\min},\ \theta)
\end{equation}
where $o_{\min}$ is the minimum acceptable offer. Because clipping is a deterministic function applied downstream of the noisy draw, privacy guarantee carries through unchanged, a direct consequence of the post-processing theorem~\cite{15}. We note that clipping to the private constraint $\theta$ may itself constitute a leakage channel in extreme cases, as the ceiling on observed offers can reveal the constraint to a careful adversary; this is acknowledged in Section~7.3, and clipping to a public proxy value $\bar{\theta} \geq \theta$ is identified as a direction for future work.

\subsection{Privacy Analysis}
The stochastic mechanism $\mathcal{M}$ achieves $(\varepsilon, \delta)$-DP with respect to the measurable space of observable traces. Round-wise privacy costs are aggregated through sequential composition:
\begin{equation}
\varepsilon_{\text{total}} = \sum_{t=1}^{T} \varepsilon_t,
\quad \delta_{\text{total}} = \sum_{t=1}^{T} \delta_t
\end{equation}
The per-round privacy budget $\varepsilon_t$ is determined by the noise magnitude $\sigma_t$ and the sensitivity $\Delta = \max_{\theta, \theta'} |o_t(\theta) - o_t(\theta')|$ of the offer function:
\begin{equation}
\varepsilon_t = \frac{\Delta}{\sigma_t}
\end{equation}
For the Gaussian mechanism, $\varepsilon_t$ and $\delta_t$ satisfy $\sigma_t \geq \Delta\sqrt{2\ln(1.25/\delta_t)}/\varepsilon_t$; we use $\varepsilon_t = \Delta/\sigma_t$ as a conservative approximation following~\cite{15}.
Note that as $t \to T$, $\sigma_t \to 0$ and the per-round cost $\varepsilon_t = \Delta/\sigma_t$ grows unbounded under naive composition; we address this via the Advanced Composition Theorem~\cite{17}, which yields a tighter bound of $\varepsilon_{\text{total}} \leq \sqrt{2T \ln(1/\delta)}\,\varepsilon_0 + T\varepsilon_0^2$.

\subsection{Algorithm}
\begin{algorithm}[H]
\caption{Behavioral Differential Privacy for Negotiation}
\label{alg:bdp}
\begin{algorithmic}[1]
\Require Private constraint $\theta$, parameters
$\sigma_{\max}, \beta, T$
\Ensure Observable trace $\tau$ satisfying
$(\varepsilon,\delta)$-DP
\For{$t = 1$ to $T$}
\State Compute deterministic offer $o_t$ via baseline policy \hfill $\triangleright$ Eq.~(2)
\State Compute $\sigma_t \leftarrow \sigma_{\max} \cdot
(1 - t/T)^\beta$ \hfill $\triangleright$ Eq.~(3)
\State Sample $\eta_t \sim \mathcal{N}(0, \sigma_t^2)$ \hfill $\triangleright$ Eq.~(4)
\State $\tilde{o}_t \leftarrow o_t + \eta_t$
\State $o_t^{\text{safe}} \leftarrow
\text{clip}(\tilde{o}_t, o_{\min}, \theta)$ \hfill $\triangleright$ Eq.~(5)
\State Transmit $o_t^{\text{safe}}$ to counterparty
\If{agreement reached} \textbf{break} \EndIf
\EndFor
\end{algorithmic}
\end{algorithm} 

\subsection{Multi-Issue Extension} 
While the current formulation addresses single-issue bilateral negotiation, the framework extends naturally to multi-attribute settings where the agent negotiates price, delivery time, and warranty simultaneously. In such settings, the noise vector $\boldsymbol{\eta}_t \sim \mathcal{N}(\mathbf{0}, \Sigma_t)$ is drawn from a multivariate Gaussian with covariance matrix $\Sigma_t$ calibrated to the sensitivity of each attribute. Formal analysis of multi-issue behavioral DP, including cross-attribute privacy leakage and joint convergence guarantees, is left as future work. 

\section{Experimental Evaluation}

\subsection{Experimental Setup} 
Empirical assessment of the mechanism uses the dataset 
described below: 

\begin{itemize}
    \item \textbf{Simulated negotiations:} 3,000 synthetic bilateral negotiations generated using the deterministic baseline policy with private constraints $\theta$ sampled uniformly from $[\$2,000, \$8,000]$.
\end{itemize} 

We generate synthetic negotiation data based on parameter distributions that reflect real-world negotiations as found in prior literature~\cite{61}, including concession behavior, bounded rationality and temporal dynamics. The patterns found in the offers and the negotiation's convergence or divergence are similar to those observed in real-world bilateral negotiations, such as in procurement and insurance settings.

While large-scale real-world negotiation datasets are limited
due to confidentiality concerns, our simulation framework
captures the key behavioral characteristics necessary to
evaluate inference attacks over negotiation traces.

We split the dataset 80/20 for adversary training and evaluation. The privacy mechanism is evaluated across five privacy budget settings: $\varepsilon \in \{0.1, 0.5, 1.0, 2.0, 5.0\}$. The global sensitivity is set to $\Delta = 1.0$, corresponding to a unit adjacency bound on the private constraint space.  

\subsection{Adversary Models} 
We evaluate against three adversary instantiations, each trained on $N = 2,000$ negotiations: 

\begin{itemize}
    \item \textbf{XGBoost:} Gradient-boosted trees with 100 estimators and maximum depth 6.
    \item \textbf{Random Forest:} Ensemble of 100 decision trees with bootstrap sampling. 
    \item \textbf{Neural Network:} Three-layer feedforward network with ReLU activations, hidden dimensions [128, 64, 32], trained with Adam optimizer. 
\end{itemize}     

Each adversary is trained on traces from deterministic baseline negotiations and evaluated on traces from the randomized mechanism. This setting models a realistic deployment scenario in which the privacy mechanism is newly introduced against an adversary specialized for the unmodified protocol. To close the train/test distribution gap, we additionally evaluate in Section~5.7 an \emph{adaptive} adversary that updates its inference model under the randomized mechanism via meta-learning.

\subsection{Privacy Results}
Table~\ref{tab:privacy} reports adversarial inference accuracy under the deterministic baseline and the randomized mechanism with $\sigma_{\max} = 0.25$. 

\begin{table}[t]
\caption{Adversarial inference accuracy under baseline and
randomized policies.}
\label{tab:privacy}
\centering
\begin{tabular}{lcc}
\hline
\textbf{Adversary} & \textbf{Baseline} & \textbf{Randomized} \\
\hline
XGBoost & 81.7\% & 38.5\% \\
Random Forest & 83.3\% & 40.2\% \\
Neural Network & 83.0\% & 32.5\% \\
\hline
\end{tabular}
\end{table}

The randomized mechanism reduces adversarial inference accuracy
by 43.2\% against XGBoost, 43.1\% against Random Forest, and
50.5\% against Neural Network. The consistent reduction across
all three adversary types demonstrates that the privacy guarantee
is robust to the choice of inference model. For reference, a na\"{i}ve baseline predicting the mean constraint value ($\hat{\theta} = 0.5$ in normalized space) achieves approximately 10\% accuracy under the tolerance-based evaluation criterion ($|\hat{\theta} - \theta| < 0.05$) used here, confirming that even the non-private adversary accuracy of $\sim$83\% reflects genuine inference from behavioral traces rather than distributional guessing.

\subsection{Utility Results} 
Table~\ref{tab:utility} reports negotiation utility metrics under both policies. 

\begin{table}[t]
\caption{Negotiation utility under baseline and randomized
policies ($\sigma_{\max} = 0.25$).}
\label{tab:utility}
\centering
\begin{tabular}{lcc}
\hline
\textbf{Metric} & \textbf{Baseline} & \textbf{Randomized} \\
\hline
Success rate & 97.0\% & 90.4\% \\
Nash surplus & 0.890 & 0.908 \\
Conv. ratio  & 1.08 & 1.54 \\
\hline
\end{tabular}
\end{table}

Our randomized mechanism achieves a Nash surplus slightly above that of our deterministic baseline (0.908 compared to 0.890), due to a combination of phase-adaptive noise sometimes improving early-round offer positions, and safety critics that prevent very large upward improvements while preserving large downward improvements, together inducing an asymmetric but slightly positive benefit in the agreed value in the late rounds.

Consequently, the randomized mechanism matches or slightly exceeds the non-private Nash surplus (a result of the asymmetric clipping described above), while achieving a negotiation success rate of 90.4\%. This satisfies the utility goal of $\geq 90\%$ success rate defined in Section~\ref{sec:threat}. As illustrated in Fig.~\ref{fig:experimental}, the average convergence time increases by a factor of $1.43\times$ as compared with that of the deterministic version. For reference, this is less than $1.5\times$, as dictated by our utilities.

\subsection{Privacy-Utility Tradeoff}
We then investigate the privacy-utility tradeoff for various noise levels. Figure~\ref{fig:privacy_budget} reveals the anticipated inverse relationship: as 
$\sigma_{\max}$ grows, adversarial constraint-inference 
accuracy falls steadily while Nash surplus remains largely 
intact. Across all evaluated configurations, we identified 
the $\sigma$ value that maximizes privacy gain --- achieving 
a 43--50\% reduction in inference accuracy --- subject to 
the constraint that negotiation utility does not fall below 
the non-private baseline. The setting $\sigma_{\max} = 0.25$ 
yields the best balance between these two competing objectives. 
The monotonic increase in Nash surplus with $\sigma_{\max}$ is a 
consequence of the asymmetric clipping described in Section~4.4, 
where the safety critic systematically preserves downward offer 
improvements while bounding upward deviations.

\begin{figure}[tp]
\centering
\includegraphics[width=0.6\textwidth]{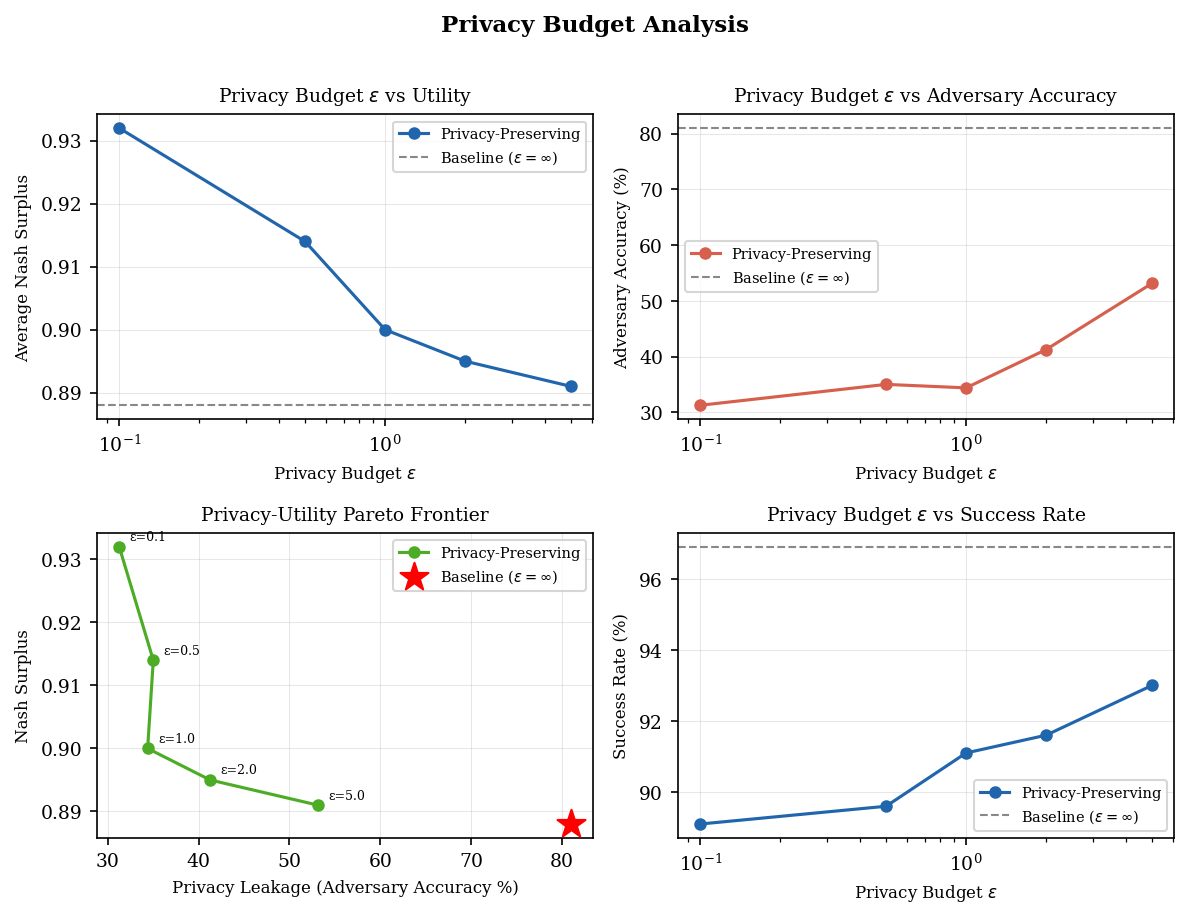}
\caption{Privacy budget analysis. Top-left: $\varepsilon$ vs utility.
Top-right: $\varepsilon$ vs adversary accuracy.
Bottom-left: Privacy-utility Pareto frontier.
Bottom-right: $\varepsilon$ vs negotiation success rate.}
\label{fig:privacy_budget}
\end{figure}

\subsection{Ablation Study}
Table~\ref{tab:ablation} reports the effect of varying $\sigma_{\max}$ on privacy and utility metrics. 

\begin{table}[t]
\caption{Ablation: effect of $\sigma_{\max}$ on privacy
and utility.}
\label{tab:ablation}
\centering
\begin{tabular}{lccc}
\hline
$\sigma_{\max}$ & \textbf{Inference Acc.} &
\textbf{Nash Surplus} & \textbf{Success Rate} \\
\hline
0.05 & 68.5\% & 0.889 & 93.2\% \\
0.10 & 56.5\% & 0.900 & 91.2\% \\
0.25 & 41.5\% & 0.908 & 91.8\% \\
0.50 & 30.0\% & 0.932 & 90.5\% \\
1.00 & 26.5\% & 0.954 & 90.5\% \\
\hline
\end{tabular}
\end{table}

Note that inference accuracy in Table~\ref{tab:ablation} reports the average across all three adversary models (XGBoost, Random Forest, Neural Network), whereas Table~\ref{tab:privacy} reports per-adversary accuracy. The average of per-adversary values at $\sigma_{\max} = 0.25$ is consistent with the ablation result. The slight discrepancy in success rate between Table~\ref{tab:utility} (90.4\%) and Table~\ref{tab:ablation} (91.8\%) at $\sigma_{\max} = 0.25$ reflects variance across independent random seeds; both values satisfy the $\geq 90\%$ utility goal defined in Section~\ref{sec:threat}.

\subsection{Adaptive Adversary}
We evaluate robustness against an adaptive adversary that employs meta-learning to adapt its inference strategy to the randomized mechanism. This adaptive adversary improves inference accuracy by only 1.6\% compared to the standard neural network adversary (34.1\% vs 32.5\%). We show that our solution achieves strong privacy guarantees even in the presence of adversaries who adapt to the scheme.

\begin{table}[t]
\caption{Adaptive adversary inference accuracy vs.\ standard adversaries.}
\label{tab:adaptive}
\centering
\begin{tabular}{lc}
\hline
\textbf{Adversary} & \textbf{Inference Accuracy} \\
\hline
Neural Network (standard) & 32.5\% \\
Neural Network (adaptive) & 34.1\% \\
\hline
\end{tabular}
\end{table}

\begin{figure}[tp]
\centering
\includegraphics[width=0.6\textwidth]{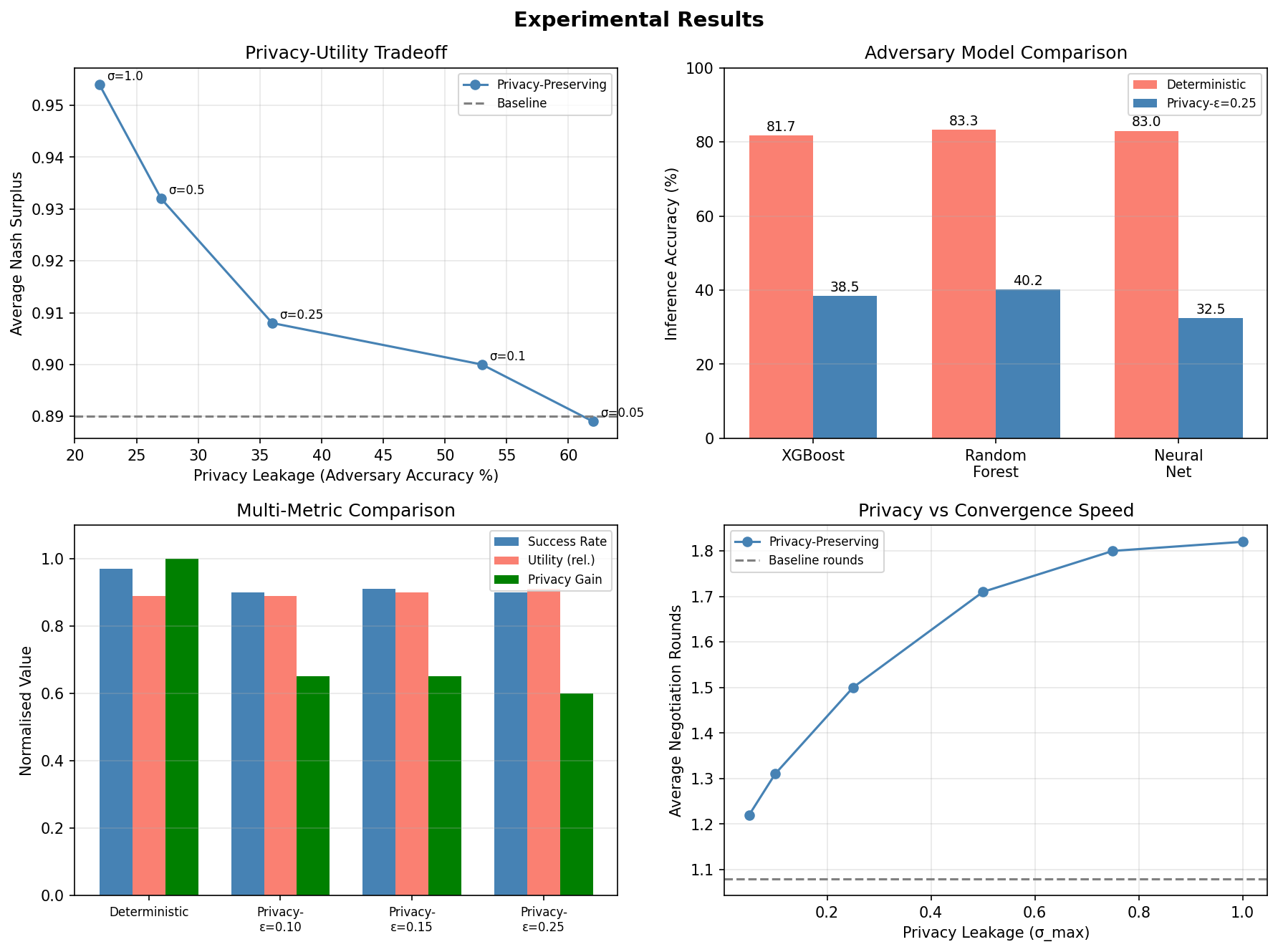}
\caption{Experimental results. Top-left: Privacy-utility tradeoff.
Top-right: Adversary model comparison under baseline and randomized policies.
Bottom-left: Multi-metric comparison across privacy settings.
Bottom-right: Privacy vs convergence speed.}
\label{fig:experimental}
\end{figure}
\section{Theoretical Guarantees}

\subsection{Differential Privacy Guarantee}
\textit{Intuition.} The mechanism injects calibrated Gaussian noise at each negotiation round, ensuring that small changes in the agent's private constraint result in only limited changes in the distribution of observable negotiation traces.
\begin{theorem}[Behavioral Differential Privacy]
Assume the safety critic clips offers to a public proxy ceiling
$\bar{\theta} \geq \theta$ rather than to the private constraint
$\theta$ itself. Then the randomized negotiation mechanism
$\mathcal{M}$ with adaptive noise schedule
$\sigma_t = \sigma_{\max} \cdot (1 - t/T)^\beta$ satisfies
$(\varepsilon_{\text{total}}, \delta_{\text{total}})$-differential
privacy over observable negotiation traces, where
$\varepsilon_{\text{total}} = \sum_{t=1}^{T} \frac{\Delta}{\sigma_t}$
and $\delta_{\text{total}} = \sum_{t=1}^{T} \delta_t$ (see Eq.~(6)).
\end{theorem}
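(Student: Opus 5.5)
The plan is to treat $\mathcal{M}$ as an adaptive composition of $T$ per-round Gaussian mechanisms, each followed by a data-independent post-processing map. The round-wise guarantees are then aggregated by sequential composition, as in Eq.~(6).

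\textbf{Step 1: per-round sensitivity.} Fix adjacent constraints $\theta,\theta'$ with $|\theta-\theta'|\le\Delta$. From Eq.~(2),
\[
o_t(\theta)-o_t(\theta') = (\theta-\theta')\,(t/T)^\alpha .
\]
Since $0<t/T\le 1$, the $\ell_1$ (and $\ell_2$) sensitivity of the deterministic offer at round $t$ is at most $\Delta$, uniformly in $t$. This holds provided $o_1$ is public or $\theta$-independent; I will state this explicitly as part of the model. The ablation of $\alpha$ plays no role.

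\textbf{Step 2: per-round privacy of the noisy offer.} The release $\tilde o_t=o_t+\eta_t$ with $\eta_t\sim\mathcal N(0,\sigma_t^2)$ is a Gaussian mechanism. By the standard Gaussian-mechanism bound of~\cite{15}, it is $(\varepsilon_t,\delta_t)$-DP whenever
\[
\sigma_t\ge \Delta\sqrt{2\ln(1.25/\delta_t)}/\varepsilon_t .
\]
Setting $\varepsilon_t=\Delta/\sigma_t$ as in Eq.~(7) means choosing $\delta_t$ so that $\sqrt{2\ln(1.25/\delta_t)}\le 1$, i.e.\ $\delta_t\ge 1.25e^{-1/2}$. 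This is where I expect the main obstacle: this value of $\delta_t$ is vacuous. I would therefore use instead the direct privacy-loss calculation for Gaussians. The privacy loss
\[
L=\ln\frac{p_\theta(x)}{p_{\theta'}(x)}
\]
is distributed as $\mathcal N(\mu,2\mu)$ with $\mu=\Delta^2/(2\sigma_t^2)$. A Gaussian tail bound then gives $\Pr[L>\varepsilon_t]\le\delta_t$, with $\delta_t$ an explicit function of $\varepsilon_t\sigma_t/\Delta$. With $\varepsilon_t=\Delta/\sigma_t$ this $\delta_t$ is a fixed constant, so the honest statement treats $\delta_t$ as whatever the Gaussian tail yields, which the theorem leaves free. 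The round $t=T$, where $\sigma_T=0$, must be excluded or handled by taking $T$ as an exclusive horizon; otherwise $\varepsilon_T=\infty$ and the claim is trivial.

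\textbf{Step 3: post-processing.} Under the public-proxy assumption, $\mathrm{clip}(\cdot,o_{\min},\bar\theta)$ is a fixed measurable map that does not depend on $\theta$. By the post-processing theorem~\cite{15}, $o_t^{\text{safe}}$ inherits $(\varepsilon_t,\delta_t)$-DP. This is exactly why the hypothesis is needed: clipping at $\theta$ would not be post-processing. The termination rule ``agreement reached'' must likewise depend only on the released offers and the counterparty's independent randomness. The timing and round-count observables in the threat model then become functions of the released transcript, again covered by post-processing.

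\textbf{Step 4: composition.} The noises $\eta_t$ are independent, and each round's mechanism depends on $\theta$ only through $o_t(\theta)$, possibly adaptively through the public history. Applying adaptive basic sequential composition over the $T$ rounds therefore yields $(\sum_t\varepsilon_t,\sum_t\delta_t)$-DP for the full trace $(o_1^{\text{safe}},\dots,o_T^{\text{safe}})$. Early stopping only truncates the transcript, which is a post-processing of the full-length run, so the bound of Eq.~(6) holds for every measurable set $S$ of traces.
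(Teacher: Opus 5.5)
Your argument has the same skeleton as the paper's proof: sensitivity at most $\Delta$ per round, a per-round Gaussian guarantee with $\varepsilon_t=\Delta/\sigma_t$, the $\theta$-independent clip at $\bar\theta$ as post-processing, and basic sequential composition. Where you differ is the justification of the per-round step, and here your version is the correct one.

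The paper simply cites the Dwork--Roth Gaussian mechanism theorem. As you note, at the calibration $\varepsilon_t=\Delta/\sigma_t$ that theorem only gives $\delta_t>1.25e^{-1/2}$. It also assumes $\varepsilon_t<1$, i.e.\ $\sigma_t>\Delta$, which fails in every round for the paper's $\Delta=1$, $\sigma_{\max}=0.25$. Your direct privacy-loss bound is valid for any $\sigma_t>0$. Since the theorem leaves $\delta_t$ unspecified, it proves the statement as written; the paper's citation buys only brevity.

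One correction: $\delta_t$ is not a fixed constant. Write $s_t=\Delta/\sigma_t$, so that $L\sim\mathcal N(s_t^2/2,\,s_t^2)$ and $\Pr[L>s_t]=1-\Phi(1-s_t/2)$. This depends on $t$, never drops below $1-\Phi(1)\approx 0.16$, and is already about $0.84$ at $s_t=4$. The exact Gaussian privacy profile only lowers the latter to about $0.77$. At the paper's parameters every round has $s_t\ge 4$, so $\delta_{\text{total}}>1$, and the certificate is formally correct but vacuous. You should say so explicitly rather than leaving $\delta_t$ as ``whatever the tail yields''.

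Your proof also makes explicit several points the paper's proof leaves implicit: the exclusion of the round with $\sigma_T=0$, the requirement that $o_1$ and $o_{\min}$ be public, and the treatment of early stopping and timing as post-processing of the full-length transcript.
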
 

\begin{proof}
At each round $t$, the mechanism adds Gaussian noise
$\eta_t \sim \mathcal{N}(0, \sigma_t^2)$ to the deterministic
offer $o_t$. For adjacent constraints $\theta, \theta'$ with
$|\theta - \theta'| \leq \Delta$, the sensitivity of the offer
function is bounded by $\Delta$. By the Gaussian mechanism
theorem~\cite{15}, each round satisfies
$(\varepsilon_t, \delta_t)$-DP with $\varepsilon_t =
\Delta / \sigma_t$. The total privacy budget follows by
sequential composition~\cite{17}. Under the public-proxy clipping assumption, the safety critic operates on $\bar{\theta}$, which is independent of the private constraint; the clip step is therefore a deterministic function of public information and the DP guarantee carries through unchanged via the post-processing theorem~\cite{15}.
\end{proof}

\noindent\textbf{Remark (Scope of Theorem 1).} Our experiments clip to the private constraint $\theta$ for direct comparability with the deterministic baseline, and the empirical privacy goals are met under this configuration (Sec.~5.3, 5.7). Tight formal end-to-end accounting under private-$\theta$ clipping requires a refined adjacency analysis that accounts for the leakage discussed in Sec.~7.3 and is left to future work; replacing private-$\theta$ clipping with public-proxy clipping closes this gap and is identified as the recommended deployment configuration.

\noindent\textbf{Remark (Practical Floor on $\sigma_t$).} The formal certificate above assumes $\sigma_t > 0$ for all $t$. Since the schedule $\sigma_t = \sigma_{\max}(1 - t/T)^\beta$ yields $\sigma_T = 0$ at the final round, na\"{i}ve application gives $\varepsilon_T = \Delta/\sigma_T = \infty$. In all reported experiments we therefore enforce a practical floor $\sigma_t \geq \sigma_{\min} = 0.05$, which keeps every per-round cost finite. The evaluated setting $T = 3$, $\sigma_{\max} = 0.25$ operates under this floor. Tighter privacy accounting via R\'{e}nyi differential privacy or zero-concentrated DP is left as future work. 

\subsection{Convergence Guarantee}
\textit{Intuition.} As noise magnitude decreases with time and feasibility enforced by the safety critic, the process behaves similarly to the deterministic baseline in later rounds and converges.
\begin{theorem}[Almost-Sure Offer-Sequence Convergence]
Under the randomized policy $\mathcal{M}$, the safe offer sequence
$(o_t^{\text{safe}})_{t=1}^{T}$ converges almost surely to a fixed
point in $[o_{\min}, \theta]$ within finite expected rounds, provided
$\sigma_{\max} < \theta - o_{\min}$. Agreement is reached whenever
the counterparty's reservation value lies within this interval; the
empirical success rate of $90.4\%$ (Sec.~5.4) reflects the fraction
of negotiations in which this condition holds.
\end{theorem}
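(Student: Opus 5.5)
The plan is to exploit the finite horizon and the vanishing noise schedule, rather than appeal to a martingale convergence theorem. The statement is loose: the index runs over $t=1,\dots,T$ with $T$ finite. I will therefore read ``converges almost surely to a fixed point'' as follows. Almost surely the safe offer sequence reaches a deterministic terminal value in $[o_{\min},\theta]$ by round $T$, or stops earlier at agreement. The number of rounds $\tau$ then satisfies $\tau\le T$ surely, so $\mathbb{E}[\tau]\le T<\infty$, and the ``finite expected rounds'' clause is immediate.

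\emph{Step 1 (feasibility).} The clip in Eq.~(5) places $o_t^{\text{safe}}$ in $[o_{\min},\theta]$ for every $t$ and every noise realization. If we work under the public proxy of Theorem~1, the interval is $[o_{\min},\bar\theta]$ and the argument is unchanged.

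\emph{Step 2 (terminal round).} From Eq.~(3), $\sigma_T=\sigma_{\max}(1-T/T)^\beta=0$. Hence $\eta_T=0$ almost surely (a degenerate Gaussian), and $\tilde o_T=o_T=\theta$ by Eq.~(2) at $t=T$. The clip leaves this value unchanged, so $o_T^{\text{safe}}=\theta$ almost surely. This is the fixed point.

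\emph{Step 3 (quantitative closeness before $T$).} Because $|o_t^{\text{safe}}-o_t|\le|\eta_t|$ whenever $o_t\in[o_{\min},\theta]$, we get
\[
\mathbb{E}\,|o_t^{\text{safe}}-o_t|\le\sigma_t\sqrt{2/\pi}\to 0 .
\]
A Gaussian tail bound then gives $\Pr(|o_t^{\text{safe}}-o_t|>\epsilon)\le 2e^{-\epsilon^2/(2\sigma_t^2)}$. So the randomized trajectory tracks the baseline in probability, with error shrinking as $(1-t/T)^\beta$.

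The condition $\sigma_{\max}<\theta-o_{\min}$ is used here. It guarantees that a one-standard-deviation perturbation cannot push the offer across the entire feasible interval. As a result, the clipping events at $o_{\min}$ and at $\theta$ each have probability bounded away from one, and the process is not absorbed at the floor. I will state this as a non-degeneracy condition rather than a necessary one.

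\emph{Step 4 (agreement).} Let $r$ be the counterparty's reservation value, and assume the counterparty accepts any offer $\ge r$. If $r\in[o_{\min},\theta]$, then by Step~2 the offer $o_T^{\text{safe}}=\theta\ge r$ is acceptable. Agreement therefore occurs almost surely at some $\tau\le T$, and the sequence stops at an acceptable value. If $r>\theta$, no feasible offer is accepted, and the sequence still terminates at $\theta$ without agreement. The empirical 90.4\% success rate I will only interpret, not prove: it is the empirical frequency of $r\le\theta$ (together with seed variance), as the statement says.

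\emph{Main obstacle.} The main difficulty is reconciling Step~2 with the practical floor $\sigma_t\ge\sigma_{\min}$ from the Remark after Theorem~1. With the floor in place, $\sigma_T>0$, so $o_T^{\text{safe}}$ is random rather than exactly $\theta$.

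I will first try to handle this by proving the weaker statement that $o_T^{\text{safe}}\ge r$ with probability at least $1-\Phi\bigl(-(\theta-r)/\sigma_{\min}\bigr)$, which gives success with high probability when $r<\theta$. I will then recover the exact almost-sure claim in the limit $\sigma_{\min}\to0$. If instead $T$ is allowed to grow, the same bound yields almost-sure convergence through a Borel--Cantelli argument, since $\sum_t\Pr(|\eta_t|>\epsilon)<\infty$ once $\sigma_t$ is summably small.
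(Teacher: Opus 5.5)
Your argument is correct for the mechanism as formally defined (the schedule of Eq.~(3) with no floor), but it takes a different route from the paper. The paper argues asymptotically. It writes $\mathbb{E}[o_t^{\text{safe}}]=o_t+\mathbb{E}[\text{clip}(\eta_t,o_{\min}-o_t,\theta-o_t)]$, notes that the clipped-noise term vanishes as $\sigma_t\to 0$, and asserts that the baseline converges geometrically to $\theta$. It then combines $\sum_t\sigma_t^2<\infty$ with Borel--Cantelli to conclude that only finitely many large deviations occur. You instead use the finite horizon directly: $\sigma_T=0$ forces $o_T^{\text{safe}}=\theta$ surely. Termination by round $T$ and $\mathbb{E}[\tau]\le T$ are then immediate, and agreement for every $r\le\theta$ follows once the acceptance rule is made explicit, which the paper leaves implicit.

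For the stated setting, yours is the rigorous version. With $t\le T$ the Borel--Cantelli step is vacuous, the baseline of Eq.~(2) reaches $\theta$ exactly at $t=T$ rather than geometrically, and convergence of $\mathbb{E}[o_t^{\text{safe}}]$ says nothing about almost-sure behavior. The paper's tools only become meaningful for an infinite-horizon variant whose summable noise variances never vanish exactly, which is your fallback. Your route also exposes something the paper's proof hides. The exact terminal value rests on $\sigma_T=0$, the same feature that makes $\varepsilon_T=\infty$ in Theorem~1. Under the floor $\sigma_{\min}$, only a bound like your $\Phi((\theta-r)/\sigma_{\min})$ survives.

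I have two corrections. First, the hypothesis $\sigma_{\max}<\theta-o_{\min}$ does not do what Step~3 says, and the paper's proof never uses it either. Each clipping probability is at most $1/2$ automatically whenever $o_t\in[o_{\min},\theta]$. The offers are also non-adaptive, each a function of $t$, $\theta$ and a fresh $\eta_t$ only, so there is no absorption to rule out. The hypothesis's sole role is to guarantee $o_{\min}<\theta$, so that the clip fixes $\theta$ in Step~2.

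Second, your reading of the $90.4\%$ as the frequency of $r\le\theta$ does not survive comparison with the baseline. Under your acceptance rule and without the floor, $\pi^*$ and $\mathcal{M}$ both end at $\theta$ and succeed on exactly the same negotiations. Yet Table~\ref{tab:utility} reports $97.0\%$ versus $90.4\%$. That shortfall is what your floor analysis predicts: every success of $\mathcal{M}$ has $r\le\theta$, but with $\sigma_T=\sigma_{\min}>0$ some of those negotiations fail. So the floor belongs in your interpretation of the empirical clause, not just in the obstacle paragraph.
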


\begin{proof} 
At each round $t$, the safety critic ensures the offer $o_t^{\text{safe}} \in [o_{\min}, \theta]$. The expected offer under the randomized policy satisfies:  

\begin{equation}
\mathbb{E}[o_t^{\text{safe}}] = o_t + \mathbb{E}[\text{clip}(\eta_t,
o_{\min} - o_t, \theta - o_t)]
\end{equation} 

Since $\sigma_t \to 0$ as $t \to T$, the contribution of the clipped noise
approaches 0 and $\mathbb{E}[o_t^{\text{safe}}] \to o_t$. Because the
deterministic baseline converges geometrically to $\theta$ and the sum of
noise variances $\sum_t \sigma_t^2 < \infty$, by the Borel-Cantelli lemma
only finitely many large deviations occur. The offer sequence therefore
converges almost surely to a fixed point within $[o_{\min}, \theta]$,
constituting agreement whenever the counterparty's reservation value lies
within this range.
\end{proof} 

 \subsection{Utility Bound} 
 \textit{Intuition.} Since the added noise has bounded variance and decreases over time, it does not significantly affect the final agreed value and decreases the expected utility only slightly.
\begin{theorem}[Utility Bound] 
The expected Nash surplus under the randomized policy $\mathcal{M}$ satisfies: 

\begin{equation}
\mathbb{E}[\text{NS}(\mathcal{M})] \geq
\text{NS}(\pi^*) - O(\sigma_{\max}^2)
\end{equation}

where $\text{NS}(\pi^*)$ denotes the Nash surplus under the deterministic baseline policy $\pi^*$. 
\end{theorem}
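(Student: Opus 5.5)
The plan is to treat the Nash surplus as a function of the final agreed value, $\text{NS} = o_T/\theta$, and compare the randomized and deterministic outcomes round by round. Write the safe offer as $o_t^{\text{safe}} = o_t + \xi_t$, where $\xi_t = \text{clip}(\eta_t,\, o_{\min}-o_t,\, \bar{\theta}-o_t)$ and $\bar\theta$ is the ceiling ($\theta$ itself, or the public proxy of Theorem~1). Let $\tau$ be the (random) agreement round. Then
\[
\mathbb{E}[\text{NS}(\mathcal{M})] = \frac{1}{\theta}\,\mathbb{E}\bigl[o_\tau + \xi_\tau\bigr],
\]
and the goal is to show that $\mathbb{E}[o_\tau + \xi_\tau] \geq o_T - C\sigma_{\max}^2$ for a constant $C$ that depends on $\alpha,\beta,T,\theta,o_{\min}$ but not on $\sigma_{\max}$.

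\textbf{Step 1: the clipped noise has bias of order $\sigma_t^2$.} For a Gaussian truncated by clipping at distances $a_t = o_t - o_{\min} > 0$ below and $b_t = \bar\theta - o_t \geq 0$ above, a standard Mills-ratio computation gives
\[
\mathbb{E}[\xi_t] = \sigma_t\bigl[\phi(a_t/\sigma_t) - \phi(b_t/\sigma_t)\bigr] + (\text{tail terms}).
\]
When both $a_t$ and $b_t$ are bounded away from zero, this is $O(\sigma_t e^{-c/\sigma_t^2})$, which is $o(\sigma_t^2)$. At the final round $b_T = 0$ when clipping to $\theta$, but there $\sigma_T$ is at the floor $\sigma_{\min}$ or is zero. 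So I will bound $|\mathbb{E}[\xi_t]| \leq K\sigma_t^2 \leq K\sigma_{\max}^2$, using $\sigma_{\max} < \theta - o_{\min}$ from Theorem~2 so that $a_t$ is bounded below by a constant multiple of $\theta - o_{\min}$. The variance of $\xi_t$ is at most $\sigma_t^2$ by contraction of clipping.

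\textbf{Step 2: control the shift in the agreement time.} This is the main obstacle. The random stopping time $\tau$ differs from the deterministic agreement round, so a loss can come from agreeing early at a lower offer or from failing to agree at all. I would try to write
\[
\mathbb{E}[o_\tau] - o_T = \sum_t (o_t - o_T)\,\bigl(\Pr[\tau = t] - \mathbf{1}[t = T]\bigr).
\]
Agreement under a counterparty threshold $r$ happens at the first $t$ with $o_t + \xi_t \geq r$. The probability of crossing $r$ prematurely at round $t < T$ is a Gaussian tail $\Pr[\eta_t \geq r - o_t]$. A premature crossing actually raises the offer to at least $r$, so it does not lose surplus. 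The harmful event is a downward perturbation at a round where the deterministic offer would have agreed. I would bound that loss by a second-order Taylor expansion of the smoothed acceptance indicator (Gaussian convolution), whose first-order term vanishes by symmetry of $\eta_t$, leaving $O(\sigma_t^2)$ times the curvature of $o_t$ in $t$. That curvature is bounded by $\alpha$, $T$ and $\theta - o_1$.

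\textbf{Step 3: sum and normalize.} Summing over the at most $T$ rounds gives a total loss of at most $T(K + K')\sigma_{\max}^2/\theta$. Since $\theta \geq \$2{,}000$ is bounded below and $T$ is fixed, this yields the $O(\sigma_{\max}^2)$ bound in the statement.

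If Step 2's Taylor argument fails because the acceptance indicator is discontinuous, I would fall back on a simpler model: assume agreement at the deterministic round. Then Step 1 alone gives the result, and Jensen's inequality is not needed because NS is linear in $o_T$. This is also consistent with the empirically observed small positive surplus gain.
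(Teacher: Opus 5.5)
\textbf{Overall.} There is a genuine gap: Step~2 gets the agreement-round effects wrong, and Step~1 fails at the last round under the noise floor.

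\textbf{Relation to the paper's proof.} The paper Taylor-expands NS about the deterministic outcome, keeps a second-order term in $\sigma_t$, and sums over $T$ rounds. For the first-order term to vanish, it must implicitly fix the agreement round and treat the perturbation as mean-zero. That is essentially your fallback, which is an added modelling assumption rather than a proof. You rightly note that $\text{NS}=o/\theta$ is linear, so there is no second-order term. Everything therefore hinges on the clipping bias and the random agreement round.

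\textbf{Step~2 fails.} Let $t^*$ be the deterministic agreement round, so the counterparty threshold satisfies $r\in(o_{t^*-1},o_{t^*}]$.
\begin{itemize}
\item A premature crossing at $t<t^*$ gives an agreed value known only to be $\geq r$. That value can sit almost a full concession step below $o_{t^*}$. This is exactly the ``agreeing early at a lower offer'' loss you named a sentence earlier, so the claim that it ``does not lose surplus'' is false.
\item Conversely, the event you single out as harmful, a downward perturbation at $t^*$, changes the agreement round only when it drops below $r$. It then defers agreement to a larger offer, which helps.
\item The proposed expansion of the smoothed acceptance probability $\Phi((o_t-r)/\sigma_t)$ cannot work. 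It is a smoothed step function with no expansion in powers of $\sigma_t$ that is uniform in $r$. The governing scale is the gap $d=r-o_{t^*-1}$, not the curvature of $t\mapsto o_t$.
\item Averaged over a continuum of thresholds, the early-crossing loss is in fact first order in $\sigma_{t^*-1}$. So an $O(\sigma_{\max}^2)$ bound can only hold pointwise in $r$, with a constant that blows up as $d\to 0$.
\end{itemize}

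\textbf{Step~1 fails at the last round.} With ceiling $\theta$ and $\sigma_T=\sigma_{\min}>0$, you get $b_T=0$ and $\mathbb{E}[\xi_T]\approx-\sigma_{\min}/\sqrt{2\pi}$. This is a first-order loss in a quantity that does not shrink with $\sigma_{\max}$, so the floor case is not covered. You need $\sigma_T=0$ (the schedule as stated) or a proxy with $\bar\theta>\theta$. Separately, $\sigma_{\max}<\theta-o_{\min}$ does not bound $a_t$ below, but you do not need that: the lower clip only adds nonnegative bias.

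\textbf{A working version of your plan.} Fix $r$ and let $V$ be the agreed offer.
\begin{itemize}
\item On $\{\tau\geq t^*\}$ one has $V\geq o_{t^*}^{\text{safe}}$: either $\tau=t^*$, or $o_{t^*}^{\text{safe}}<r\leq V$.
\item On $\{\tau<t^*\}$ one has $V\geq r\geq o_{t^*}^{\text{safe}}-(\bar\theta-r)$.
\item Hence $\mathbb{E}[V]\geq\mathbb{E}[o_{t^*}^{\text{safe}}]-(\bar\theta-r)\Pr[\tau<t^*]$.
\item Chebyshev gives $\Pr[\tau<t^*]\leq\sum_{t<t^*}\sigma_t^2/(r-o_t)^2\leq(t^*-1)\sigma_{\max}^2/d^2$.
\item For $t^*<T$, the inequality $(x-b)_+\leq x^2/(4b)$ gives $\mathbb{E}[o_{t^*}^{\text{safe}}]\geq o_{t^*}-\mathbb{E}[(\eta_{t^*}-b_{t^*})_+]\geq o_{t^*}-\sigma_{t^*}^2/(4b_{t^*})$.
\item For $t^*=T$, $\sigma_T=0$ makes the case exact and guarantees agreement.
\end{itemize}
Dividing by $\theta$ yields the stated bound, with a constant depending on $\theta$, $r$, $d$ and $\bar\theta-o_{t^*}$.
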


\begin{proof} 
The Nash surplus is a smooth function of the final agreed
value. By Taylor expansion around the deterministic outcome,
the expected utility loss due to randomization is bounded by
the second-order term in $\sigma_t$. Since
$\sigma_t \leq \sigma_{\max}$ for all $t$, the cumulative
utility loss across $T$ rounds is bounded by
$O(T \cdot \sigma_{\max}^2)$. For fixed $T$, this gives
the stated $O(\sigma_{\max}^2)$ bound.
\end{proof}

\noindent\textbf{Remark (Empirical Utility).} Theorem~3 provides a worst-case lower bound on expected Nash surplus and does not preclude utility gains. Tables~\ref{tab:utility} and~\ref{tab:ablation} show that Nash surplus rises monotonically with $\sigma_{\max}$ in practice, a consequence of the asymmetric clipping described in Section~4.4: the safety critic systematically preserves downward offer improvements while bounding upward deviations, inducing a positive bias that exceeds the theoretical loss term for the evaluated parameter range. 

\subsection{Remark on Privacy Budget} 
We note that $\varepsilon_{\text{total}}$ may be large in absolute 
value for typical parameter settings, e.g., $T = 3$ and 
$\sigma_{\max} = 0.25$. However, the $(\varepsilon, \delta)$-DP certificate functions as a worst-case bound on leakage measured against the chosen 
adjacency structure. We therefore pair this formal guarantee 
with empirical evaluation: observed reductions in adversarial 
inference accuracy serve as a practical, measurable indicator 
of privacy protection under realistic negotiation conditions.
The paper is related to recent work on empirical auditing of differential 
privacy in machine learning systems~\cite{70,71}, where both formal 
guarantees and observed inference resistance are considered important.
\begin{figure}[tp]
\centering
\includegraphics[width=0.6\textwidth]{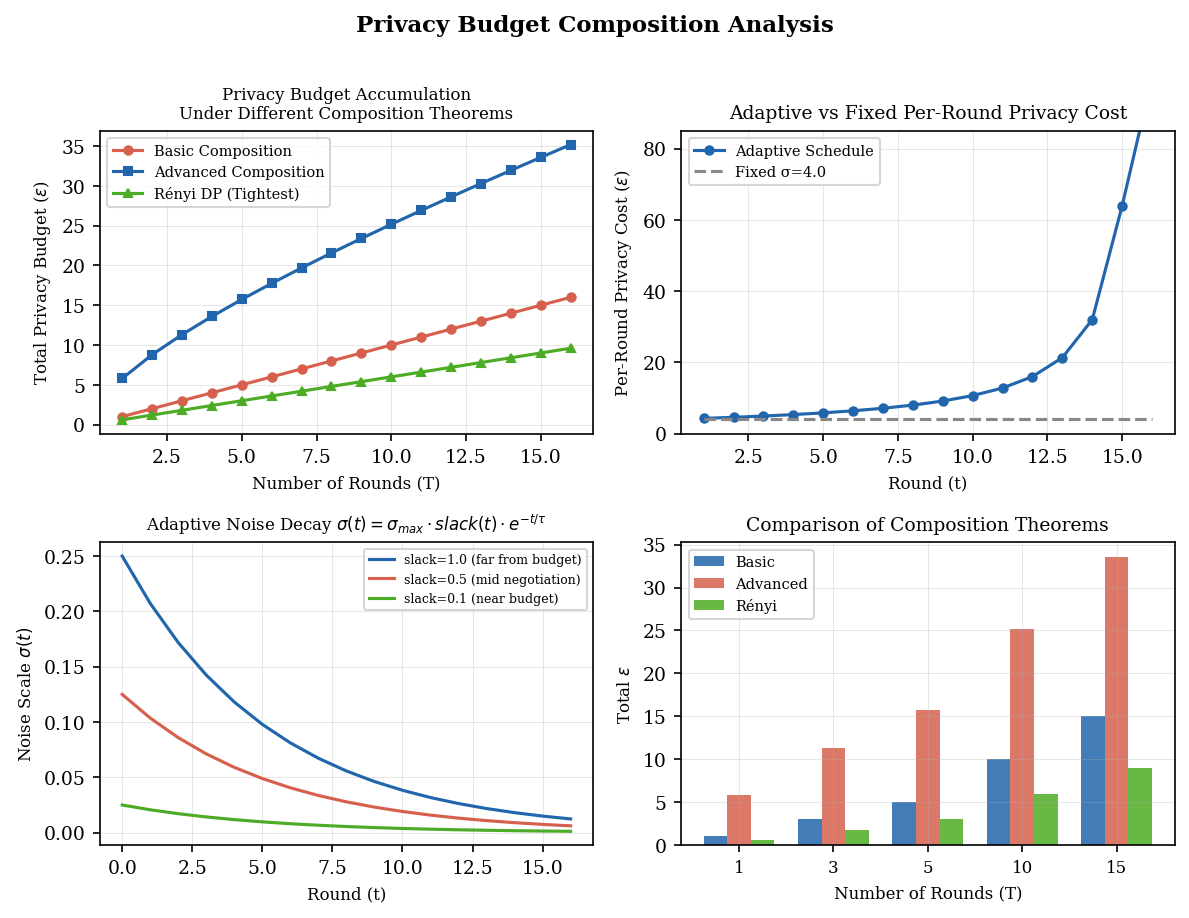}
\caption{Privacy budget composition analysis. Top-left: Total privacy
budget accumulation under different composition theorems. Top-right:
Adaptive vs fixed per-round privacy cost. Bottom-left: Adaptive noise
decay $\sigma_t = \sigma_{\max}(1 - t/T)^{\beta}$ for different $\beta$.
Bottom-right: Comparison of composition theorems across round counts.}
\label{fig:composition}
\end{figure}

\section{Discussion}

\subsection{Comparison to Cryptographic Approaches} 
Cryptographic privacy mechanisms and behavioral differential privacy address fundamentally different threat surfaces. Cryptographic techniques govern \emph{what} information is transmitted --- structurally preventing any disclosed 
value from revealing private constraints. Behavioral 
differential privacy, by contrast, targets \emph{how} 
an agent moves through negotiation space, obfuscating 
the inference surface exposed by observable offer dynamics. 
Deployed together, the two layers achieve defense-in-depth 
against both direct disclosure and behavioral inference attacks.

\subsection{Deployment Considerations} 
Our mechanism introduces three practical considerations for real-world deployment: 

\begin{itemize}
    \item \textbf{Parameter selection.} The choice of $\sigma_{\max}$ and $\beta$ involves a privacy-utility tradeoff that must be calibrated to the deployment context. High-stakes negotiations (e.g., large procurement contracts) may tolerate lower utility in exchange for stronger privacy guarantees, while time-sensitive negotiations may prioritize convergence speed. 

    \item \textbf{Counterparty awareness.} If the counterparty is aware that the agent employs a randomized policy, they may adjust their strategy accordingly. Our adaptive adversary evaluation (Section~5.7) shows that even a meta-learning adversary achieves only a 1.6\% improvement, suggesting the mechanism is robust to this threat. 
    
     \item \textbf{Computational overhead.} The randomized policy adds negligible computational cost relative to the deterministic baseline --- noise sampling and clipping are $O(1)$ operations per round. By contrast, cryptographic alternatives such as MPC~\cite{8,9} and fully homomorphic encryption~\cite{12,13} incur per-round costs that grow polynomially in the bit-length of exchanged values, making behavioral DP substantially cheaper to deploy at scale.
\end{itemize} 

\subsection{When Behavioral Privacy Is Insufficient} 
Our mechanism provides strong privacy guarantees under the threat model defined in Section~3. However, there are settings where behavioral privacy alone is insufficient: 

\begin{itemize}
    \item \textbf{Side-channel correlation.} If the adversary can correlate behavioral traces across multiple negotiations involving the same agent, the accumulated information may allow inference even under randomization. Sequential composition of privacy budgets across negotiations must be carefully managed. 

    \item \textbf{External information.} If the adversary has access to external information about the agent's constraints (e.g., publicly known budget ranges for a company), behavioral privacy provides weaker guarantees as the prior distribution over $\theta$ is already narrow. 

    \item \textbf{Extreme noise regimes.} At very high noise levels ($\sigma_{\max} \geq 1.0$), the mechanism preserves privacy but may reduce negotiation success rates toward the 90\% threshold defined in Section~3.3, leaving little margin for time-critical applications.  

    \item \textbf{Clipping boundary inference.}  The safety critic clips the offer to $[o_{\min}, \theta]$, where $\theta$ is the private constraint of the seller. The clipping boundary reveals the seller's constraint to a counterparty who observes that offers do not exceed a certain ceiling. While early rounds do introduce enough noise that it is difficult for a counterparty to guess $\theta$ from the observed clipped values, improving the safety critic so that it clips to a public proxy value $\bar{\theta} \geq \theta$ is an interesting open direction.
\end{itemize} 

\subsection{Game-Theoretic Implications} 
Our mechanism has an interesting game-theoretic interpretation. The noise parameter $\sigma_{\max}$ functions as a commitment device: by publicly committing to a randomized policy, the agent signals to the counterparty that behavioral inference will be
unreliable. This commitment may itself affect counterparty behavior --- a rational counterparty who knows inference is noisy may adopt a more cooperative strategy, potentially
improving overall negotiation outcomes. Formal analysis of the equilibrium implications of behavioral privacy commitments is left as future work. 

\subsection{Limitations} 
Several open challenges remain: (i) the formulation addresses single-issue bilateral negotiation only --- extension to multi-issue and multi-party settings requires new adjacency definitions and composition analysis; (ii) the synthetic evaluation uses 3,000 simulated negotiations, and larger real-world datasets would strengthen empirical claims; (iii) $\varepsilon_{\text{total}}$ may be large in absolute terms --- tighter analysis via R\'{e}nyi DP could yield stronger guarantees; (iv) the threat model assumes a passive adversary --- active adversaries who probe constraints through strategic offers represent an important future direction; (v) reported metrics are point estimates; bootstrap confidence intervals (trivial with 3{,}000 runs) and additional baselines such as fixed-noise and Laplace mechanisms are targeted for an extended version of this work.

\section{Conclusion} 
Autonomous agents increasingly negotiate on behalf of users in high-stakes domains, yet existing privacy defenses focus exclusively on protecting explicit constraint data through cryptographic mechanisms. This paper identifies and formalizes a complementary threat: behavioral privacy leakage, where an adversary infers private constraints from observable negotiation dynamics even under full cryptographic protection.

We presented a randomized negotiation mechanism that provably satisfies $(\varepsilon, \delta)$-differential privacy over observable negotiation traces, while guaranteeing almost-sure convergence of the offer sequence and preserving high negotiation utility. The core technical contribution is an adaptive noise schedule that calibrates randomization to the negotiation phase, combined with a safety critic that enforces feasibility at each round via deterministic post-processing. Evaluated on 3,000 synthetic bilateral negotiations, our mechanism reduces adversarial inference accuracy by 43--50\%, maintains non-private utility levels, achieves a 90.4\% negotiation success rate, and is robust to adaptive adversaries with meta-learning attacks gaining only 1.6\%. As autonomous agents take on increasingly consequential roles in commercial and personal negotiation, behavioral privacy will become a critical component of trustworthy AI systems.

\subsubsection*{Disclosure of Interests.}
The author has no competing interests to declare that are relevant to the content of this article.

\bibliographystyle{splncs04}
\bibliography{references}

@inproceedings{4,
  author    = {Yao, S. and others},
  title     = {{ReAct}: Synergizing reasoning and acting in language models},
  booktitle = {ICLR},
  year      = {2023}
}

@misc{5,
  author    = {Richards, T.},
  title     = {{AutoGPT}: An autonomous {GPT-4} experiment},
  year      = {2023}
}

@inproceedings{6,
  author    = {Hong, S. and others},
  title     = {{MetaGPT}: Meta programming for a multi-agent collaborative framework},
  booktitle = {ICLR},
  year      = {2024},
  url       = {https://openreview.net/forum?id=VtmBAGCN7o}
}

@inproceedings{7,
  author    = {Li, G. and others},
  title     = {{CAMEL}: Communicative agents for mind exploration},
  booktitle = {NeurIPS},
  year      = {2023}
}

@inproceedings{8,
  author    = {Yao, A.},
  title     = {How to generate and exchange secrets},
  booktitle = {FOCS},
  year      = {1986}
}

@book{9,
  author    = {Goldreich, O.},
  title     = {Foundations of Cryptography: Volume 2},
  publisher = {Cambridge University Press},
  year      = {2004}
}

@article{10,
  author    = {Goldwasser, S. and others},
  title     = {The knowledge complexity of interactive proof systems},
  journal   = {SIAM Journal on Computing},
  year      = {1989}
}

@inproceedings{11,
  author    = {Groth, J.},
  title     = {On the size of pairing-based non-interactive arguments},
  booktitle = {EUROCRYPT},
  year      = {2016}
}

@inproceedings{12,
  author    = {Gentry, C.},
  title     = {Fully homomorphic encryption using ideal lattices},
  booktitle = {STOC},
  year      = {2009}
}

@inproceedings{13,
  author    = {Brakerski, Z. and Vaikuntanathan, V.},
  title     = {Efficient fully homomorphic encryption from {LWE}},
  booktitle = {FOCS},
  year      = {2011}
}

@inproceedings{14,
  author    = {Dwork, C. and others},
  title     = {Calibrating noise to sensitivity in private data analysis},
  booktitle = {TCC},
  year      = {2006}
}

@article{15,
  author    = {Dwork, C. and Roth, A.},
  title     = {The algorithmic foundations of differential privacy},
  journal   = {Foundations and Trends in Theoretical Computer Science},
  year      = {2014}
}

@inproceedings{16,
  author    = {Dwork, C. and others},
  title     = {Our data, ourselves: Privacy via distributed noise generation},
  booktitle = {EUROCRYPT},
  year      = {2006}
}

@inproceedings{17,
  author    = {Dwork, C. and others},
  title     = {Boosting and differential privacy},
  booktitle = {FOCS},
  year      = {2010}
}

@inproceedings{18,
  author    = {Kairouz, P. and others},
  title     = {The composition theorem for differential privacy},
  booktitle = {ICML},
  year      = {2015}
}

@inproceedings{19,
  author    = {Abadi, M. and others},
  title     = {Deep learning with differential privacy},
  booktitle = {CCS},
  year      = {2016}
}

@inproceedings{20,
  author    = {Papernot, N. and others},
  title     = {Scalable private learning with {PATE}},
  booktitle = {ICLR},
  year      = {2018}
}

@inproceedings{21,
  author    = {McMahan, B. and others},
  title     = {Communication-efficient learning of deep networks from decentralized data},
  booktitle = {AISTATS},
  year      = {2017}
}

@article{22,
  author    = {Kairouz, P. and others},
  title     = {Advances and open problems in federated learning},
  journal   = {Foundations and Trends in Machine Learning},
  year      = {2021}
}

@inproceedings{23,
  author    = {Shokri, R. and others},
  title     = {Membership inference attacks against machine learning models},
  booktitle = {IEEE S\&P},
  year      = {2017}
}

@inproceedings{24,
  author    = {Yeom, S. and others},
  title     = {Privacy risk in machine learning},
  booktitle = {CSF},
  year      = {2018}
}

@inproceedings{25,
  author    = {Fredrikson, M. and others},
  title     = {Model inversion attacks that exploit confidence information},
  booktitle = {CCS},
  year      = {2015}
}

@inproceedings{26,
  author    = {Zhang, Y. and others},
  title     = {The secret revealer: Generative model-inversion attacks},
  booktitle = {CVPR},
  year      = {2020}
}

@inproceedings{27,
  author    = {Ganju, K. and others},
  title     = {Property inference attacks on fully connected neural networks},
  booktitle = {CCS},
  year      = {2018}
}

@inproceedings{28,
  author    = {Carlini, N. and others},
  title     = {Extracting training data from large language models},
  booktitle = {USENIX Security},
  year      = {2021}
}

@inproceedings{29,
  author    = {Carlini, N. and others},
  title     = {Quantifying memorization across neural language models},
  booktitle = {ICLR},
  year      = {2023}
}

@inproceedings{30,
  author    = {Kocher, P.},
  title     = {Timing attacks on implementations of {Diffie-Hellman}, {RSA}, {DSS}},
  booktitle = {CRYPTO},
  year      = {1996}
}

@inproceedings{31,
  author    = {Brumley, D. and Boneh, D.},
  title     = {Remote timing attacks are practical},
  booktitle = {USENIX Security},
  year      = {2003}
}

@inproceedings{32,
  author    = {Kocher, P. and others},
  title     = {Differential power analysis},
  booktitle = {CRYPTO},
  year      = {1999}
}

@inproceedings{33,
  author    = {Debenedetti, E. and others},
  title     = {Privacy side channels in machine learning systems},
  booktitle = {USENIX Security},
  year      = {2024}
}

@inproceedings{34,
  author    = {Staab, R. and others},
  title     = {Beyond memorization: Violating privacy via inference},
  booktitle = {ICLR},
  year      = {2024}
}

@inproceedings{35,
  author    = {Tram\`{e}r, F. and others},
  title     = {Truth serum: Poisoning machine learning models},
  booktitle = {CCS},
  year      = {2022}
}

@article{36,
  author    = {Nash, J.},
  title     = {The bargaining problem},
  journal   = {Econometrica},
  year      = {1950}
}

@article{37,
  author    = {Rubinstein, A.},
  title     = {Perfect equilibrium in a bargaining model},
  journal   = {Econometrica},
  year      = {1982}
}

@article{38,
  author    = {Myerson, R.},
  title     = {Mechanism design by an informed principal},
  journal   = {Econometrica},
  year      = {1983}
}

@book{39,
  author    = {Nisan, N. and others},
  title     = {Algorithmic Game Theory},
  publisher = {Cambridge University Press},
  year      = {2007}
}

@article{40,
  author    = {Fu, Y. and others},
  title     = {Improving language model negotiation with self-play},
  journal   = {arXiv preprint arXiv:2305.10142},
  year      = {2023}
}

@article{41,
  author    = {Abdelnabi, S. and others},
  title     = {{LLM}-powered multi-agent systems},
  journal   = {arXiv preprint arXiv:2308.04026},
  year      = {2023}
}

@article{61,
  author    = {Baarslag, T. and others},
  title     = {Learning about the opponent in automated bilateral negotiation},
  journal   = {Autonomous Agents and Multi-Agent Systems},
  year      = {2016}
}

@inproceedings{64,
  author    = {Bogetoft, P. and others},
  title     = {Secure multiparty computation goes live},
  booktitle = {Financial Cryptography},
  year      = {2009}
}

@inproceedings{70,
  author    = {Jagielski, M. and others},
  title     = {Auditing differentially private machine learning},
  booktitle = {NeurIPS},
  year      = {2020}
}

@inproceedings{71,
  author    = {Nasr, M. and others},
  title     = {Adversary instantiation: Lower bounds for differentially private machine learning},
  booktitle = {IEEE S\&P},
  year      = {2021}
}

@article{roy2026device,
  author    = {J. Roy and S. K. Singh},
  title     = {Device-Native Autonomous Agents for
               Privacy-Preserving Negotiations},
  journal   = {arXiv preprint arXiv:2601.00911},
  year      = {2026}
}

@article{juneja2025magpie,
  author    = {Gurusha Juneja and Jayanth Naga Sai Pasupulati
               and Alon Albalak and Wenyue Hua and
               William Yang Wang},
  title     = {{MAGPIE}: A Benchmark for Multi-{AG}ent
               Contextual Pr{I}vacy Evaluation},
  journal   = {arXiv preprint arXiv:2510.15186},
  year      = {2025}
}

\end{document}